\newtheorem{lemma}{Lemma}
\newtheorem{corollary}{Corollary}
\newtheorem*{remark}{Remark}
\newtheorem{theorem}{Theorem}
\begin{document}

\title{Average and Worst-case Analysis of MIMO Beamforming Loss due to Hardware Impairments}

\author{Xuan Chen,  Matthieu  Crussière, \IEEEmembership{Member, IEEE}  and Luc Le Magoarou, \IEEEmembership{Member, IEEE}

\thanks{Xuan Chen, Matthieu Crussière and Luc Le Magoarou are with IETR, INSA Rennes, France (e-mail: lastname.firstname@insa-rennes.fr) }

\thanks{This work is supported by the French national research agency (MoBAIWL project, grant ANR-23-CE25-0013)}

}

\markboth{IEEE WIRELESS COMMUNICATION LETTERS}
{Shell \MakeLowercase{\textit{et al.}}: Bare Demo of IEEEtran.cls for IEEE Journals}
\maketitle

\begin{abstract}
In this paper, we investigate the impact of hardware impairments in antenna arrays on the beamforming performance of multi-input multi-output (MIMO) communication systems. We consider two types of imperfections: per-element gain mismatches and inter-element spacing deviations.  We analytically determine the impairment configurations that result in the worst-case degradation. In addition,  the analytical expression of the average-case performance is also derived for comparison. Simulation and theoretical results show that the average SNR degradation remains relatively limited, whereas the worst-case scenarios can exhibit substantially higher losses. These findings provide clear insight into the robustness limits of MIMO systems under practical hardware imperfections.

\end{abstract}

\begin{IEEEkeywords}
Antenna array, Hardware impairment, MIMO.  
\end{IEEEkeywords}

\IEEEpeerreviewmaketitle

\section{Introduction}

\IEEEPARstart{A}{ntenna} arrays play a vital role in modern multi-input multi-output (MIMO) communication systems due to the spatial diversity and performance gains they provide \cite{11034774,9112566,9848357}. However, most theoretical models idealize the antenna array, assuming perfectly calibrated and precisely placed elements. While this simplifies analysis and design, it overlooks the fact that real-world arrays are subject to practical limitations that can affect their performance. As systems scale and rely more heavily on precise spatial processing, even small deviations from an ideal array can have non-negligible consequences \cite{yassine2022, chatelier2024}. This calls for a deeper investigation into the direct impact of hardware imperfections at the array level.

As with all physical-layer components, antenna arrays inevitably suffer from hardware impairments due to factors such as fabrication imperfections, temperature variations, or environmental conditions. Most existing studies focus on system-level impairments occurring after the antenna array, including I/Q imbalance, oscillator phase noise, and power amplifier non-linearity \cite{9306107,6891254,9294080,9786293}, typically assuming that the array itself is ideal. In contrast, hardware impairments at the antenna array level have been theoretically addressed only in a limited number of works. To the best of our knowledge, only specific experimental setups have studied this class of impairments \cite{10047354, chatelier2023, mateosramos2025, klaimi2025,10004742}. Among these, most works attempt to correct the impairments in practice through learning-based frameworks, however, none provide a theoretical analysis of their influence.

In this paper, we investigate the effect of antenna-level hardware impairments on the beamforming performance of MIMO systems. Specifically, we consider perturbations in the complex gain and the positions of antenna elements in a uniform linear array (ULA)\footnote{Note that although we focus on the ULA in this paper, the entire analysis and our following results can also be extended to a uniform planar array (UPA) or other types of array structures.}. Unlike previous work \cite{10047354}, which is limited to experimental observations, we provide a complete theoretical analysis of the beamforming degradation caused by these impairments. This enables a deeper and more general understanding of how such imperfections affect array performance in both average and worst-case scenarios. More precisely, given a certain tolerance on the position and gain impairments, we address two key questions: (i) What is the average beamforming loss? (ii) What is the worst-case beamforming loss, and how does the corresponding antenna array look like?

The rest of the paper is organized as follows: Section II presents both the signal model and the impairment model of the considered MU-MIMO communication scenario. Section III presents the theoretical analysis, including both the average case and the worst case analysis. Section IV provides simulation results to validate the results of Section III, while Section V concludes our work.

\textbf{Notations}: Boldface letters $\boldsymbol{a}$ and normal font letters ${a}$ represent vectors and scalars, respectively. In particular, $\boldsymbol{\vec{a}}$ denotes the conventional 3D vector defined in Euclidean space.  $\boldsymbol{a}^{*}$, $\boldsymbol{a}^T$ and $\boldsymbol{a}^H$ represent the complex conjugate, the transpose and the conjugate transpose of $\boldsymbol{a}$, respectively. $\mathbb{E}[.]$ represents the mathematical expectation function.


\section{System Model}


 We consider a base station (BS) equipped with a uniform linear array (ULA) of $N$ antennas, with a $\frac{\lambda}{2}$ spacing between adjacent elements. A user equipment (UE) is assumed to have a single antenna located at a distance from the BS with a direction of departure (DoD) angle $\theta$. Then, the position vector $\vec{\boldsymbol{a}}_i$ captures the theoretical/nominal position of the $i$th antenna along the $x$-axis with respect to the center of mass. We also define the unitary vector $\vec{\boldsymbol{u}}(\theta) = [\mathrm{cos}(\theta),\mathrm{sin}(\theta),0]^T$ which points in the direction of the DoD. For the sake of  conciseness,  this vector is simply denoted as $\vec{\boldsymbol{u}}$ in the rest of the paper.

\subsection{Signal Model}

The received signal $y$ at the UE side can be expressed as follows:
 \begin{equation}
 \label{system_model}
     y = \boldsymbol{h}^T\boldsymbol{w}s+n
 \end{equation}
 \noindent where $\boldsymbol{h}\in \mathbb{C}^N$ and $\boldsymbol{w} \in \mathbb{C}^N$ represent the channel vector and the precoding vector, respectively. $s$ is the normalized transmitted symbol with unitary power (\emph{i.e.} $\mathbb{E}\{| s |^2\}=1$) and  $n$ corresponds to the noise received for the UE ($n \sim \mathbb{C}\mathcal{N}(0,\sigma^2)$). The SNR can then be defined as:
 \begin{equation}
 \label{SNR_def}
     \mathrm{SNR} \triangleq \frac{\mathbb{E}\{|\boldsymbol{h}^T\boldsymbol{w}s|^2\}}{\mathbb{E}\{|n|^2\}} = \frac{|\boldsymbol{h}^T\boldsymbol{w}|^2}{\sigma^2}.
 \end{equation}
In our work, the precoder is given by:
\begin{equation}
 \label{vec_nom}
     \boldsymbol{w}=\boldsymbol{e}(\boldsymbol{\vec{u}}) = \frac{1}{\sqrt{N}}[e^{-j\frac{2\pi}{\lambda}\boldsymbol{\vec{a}_1}\cdot \boldsymbol{\vec{u}}},\ldots,e^{-j\frac{2\pi}{\lambda}\boldsymbol{\vec{a}_N}\cdot \boldsymbol{\vec{u}}}]^T, 
 \end{equation}
 \noindent which corresponds to the nominal steering vector for an ideal ULA  based on the UE DoD. The actual channel vector, however, incorporates antenna-level impairments and is expressed as:
 \begin{equation}
 \label{vec_pert}
     \boldsymbol{h} = \beta\boldsymbol{\tilde{e}}(\boldsymbol{\vec{u}}) = \beta[g_1 e^{-j\frac{2\pi}{\lambda}\boldsymbol{\tilde{\vec{a}}}_1\cdot \boldsymbol{\vec{u}}},\ldots,g_N e^{-j\frac{2\pi}{\lambda}\boldsymbol{\tilde{\vec{a}}}_N\cdot \boldsymbol{\vec{u}}} ]^T,
 \end{equation}
\noindent where $\beta$ corresponds to the channel attenuation. $\boldsymbol{\tilde{\vec{a}}}_i$ and $g_i$ ($\forall i \in \{1,\ldots, N\}$) respectively denote the actual  position of the $i$th antenna and its complex gain, both obtained from the impairment models described in the next section.

\subsection{Hardware Impairments Model}
\label{sec:model_pert}

In practice, the impairment can be due to various reasons (\emph{e.g.} fabrication default, wear etc.), so we model the impairment model as random variables, as in \cite{10047354}. Specifically, we assume that each antenna of the ULA suffers from independent impairment. In particular, the complex gain $g_i$ for the $i$th  antenna can be written as $g_i = \rho_i e^{j2\pi\varphi_i}$, where $\rho_i \sim \mathcal{U}(1-\delta_g, 1)$ with $ 0\le \delta_g \le 1$ and $\varphi_i \sim \mathcal{U}(-\alpha_g,\alpha_g)$ with $\alpha_g \ge 0$. Physically speaking, $\rho_i$
 may result from variations in gain from one element to another, and $\varphi_i$
 may be caused by differences in the transmission line lengths or by impedance mismatches among the antenna elements. Furthermore, we also consider that $\rho_i$ and $\varphi_i$ are independent for each antenna of the ULA. Similarly, the position impairment of each antenna can be modelled as a slight random shift of the nominal position\footnote{In the case of UPA, $\vec{\boldsymbol{n}}_{p,i} = [\varepsilon_{x,i},\varepsilon_{y,i},0]^T$.}: $\forall i \in\{1,\ldots,N\}$, $\boldsymbol{\tilde{\vec{a}}}_i = \boldsymbol{\vec{a}}_i + \lambda \boldsymbol{\vec{n}}_{p,i}$, where $\boldsymbol{\vec{n}}_{p,i} = [\varepsilon_{x,i},0,0]^T$, and $\varepsilon_{x,i} \sim \mathcal{U}(-\delta_p,\delta_p)$ with $\delta_p \ge 0$. Furthermore, as the impairment at the antenna level are in practice often small in amplitude, we further assume that the overall phase shift due to position and phase impairment are bounded, that is, $2\pi|\alpha_g+\delta_p|<\frac{\pi}{2}$, or equivalently $|\alpha_g+\delta_p|<\frac{1}{4}$.


\section{Theoretical Analysis}

In this section, we evaluate the influence of the impairment on the overall beamforming performance, in particular the SNR variation in average and in the worst case relative to the nominal case. From (\ref{SNR_def}) and (\ref{vec_nom}), the nominal case SNR without impairment, denoted by $\mathrm{SNR}_\mathrm{nom}$, can be defined as follows: 
    \begin{equation}
    \label{SNR_nom}
        \mathrm{SNR}_\mathrm{nom} \triangleq \frac{\mathbb{E}\{\left| \beta\boldsymbol{e}^H(\boldsymbol{\vec{u}})\boldsymbol{e}(\boldsymbol{\vec{u}})s |^2\right\}}{\mathbb{E}\{|n|^2\}} = \frac{N\beta^2}{\sigma^2} 
    \end{equation}
    \noindent with $\sigma^2$ the received noise power at the UE. Substituting the perturbed vector $\tilde{\boldsymbol{e}}(\vec{\boldsymbol{u}})$ defined in (\ref{vec_pert}) into (\ref{SNR_nom}), it is clear that the SNR in average or the worst case inevitably depends on the  correlation between the channel vector (\emph{i.e.} the perturbed vector $\boldsymbol{\tilde{e}}(\boldsymbol{\vec{u}})$) and the precoding vector (\emph{i.e.} the nominal vector $\boldsymbol{e}(\boldsymbol{\vec{u}})$), which can be expressed as follows:
\begin{equation}
\label{correl2}
    \left| \boldsymbol{\tilde{e}}^H(\boldsymbol{\vec{u}})\boldsymbol{e}(\boldsymbol{\vec{u}}) \right| = \frac{1}{\sqrt{N}}\left|\displaystyle\sum_{i=1}^{N} \rho_i e^{j2\pi(\varepsilon_{x,i}\mathrm{cos}(\theta)-\varphi_i)}\right|.
\end{equation}

\subsection{Worst Case Analysis}

\begin{figure}
\centerline{\includegraphics[width=0.65\columnwidth]{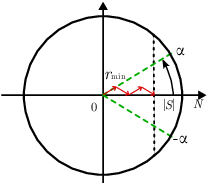}}
\caption{Graphical illustration of results of Lemma 1 with $N=4$.}
\label{fig:random_walk}
\end{figure}

In this section, we first identify the worst-case scenario associated with the impairment models described in Section \ref{sec:model_pert}. According to (\ref{correl2}), the worst-case corresponds to a  perturbation configuration that minimizes the correlation between $\boldsymbol{\tilde{e}}$ and $\boldsymbol{e}$. To determine such impairment configuration, we propose  applying the following lemma within our communication scenario:  

\begin{lemma}
\label{lemma1}
    Given $N$ vectors $\vec{\boldsymbol{v}_1},\ldots, \vec{\boldsymbol{v}_N}$ with $N$ even, where $\forall i \in \{1,\ldots,N\}$ $\vec{\boldsymbol{v}_i} = r_i e^{j\phi_i}$, $r_i \in [r_\mathrm{min}; r_\mathrm{max}]$ and $\phi_i \in [-\alpha;\alpha]$ with $|\alpha| < \frac{\pi}{4}$, then:
    \begin{equation}
        \mathrm{arg}\min_{r_i,\phi_i} |S| = \mathrm{arg}\min_{r_i,\phi_i} \left| \displaystyle\sum_{i=1}^{N} \vec{\boldsymbol{v}_i} \right| = \{r_\mathrm{min},p_i \alpha\},
    \end{equation}
\noindent where $p_i \in \{-1,1\}$ such that $\sum_{i=1}^{N}p_i = 0$.
\end{lemma}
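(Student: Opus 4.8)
The plan is to reduce this two-dimensional magnitude minimization to a one-dimensional problem by exploiting the geometry of the admissible cone. Since $|\alpha| < \frac{\pi}{4} < \frac{\pi}{2}$, every vector $\vec{\boldsymbol{v}_i} = r_i e^{j\phi_i}$ lies in the open right half-plane, so writing $S = \sum_{i=1}^N r_i\cos\phi_i + j\sum_{i=1}^N r_i\sin\phi_i$ the real part is positive. The decisive observation is then $|S| \ge \mathrm{Re}(S) = \sum_{i=1}^N r_i\cos\phi_i$: this bounds the modulus below by a quantity that is separable across the indices $i$ and controlled purely along the real axis, which is exactly the random-walk picture of Fig.~\ref{fig:random_walk} where horizontal progress is unavoidable while vertical components may cancel.

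Next I would bound each term of $\mathrm{Re}(S)$ from below. Since $r_i \ge r_{\min}$ and $\cos$ is even and strictly decreasing on $[0,\alpha] \subset [0,\tfrac{\pi}{2})$, every feasible phase satisfies $\cos\phi_i \ge \cos\alpha > 0$. Summing the $N$ termwise bounds gives $\mathrm{Re}(S) \ge N r_{\min}\cos\alpha$, hence $|S| \ge N r_{\min}\cos\alpha$ for every admissible configuration, which is a universal lower bound on the objective.

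It then remains to exhibit a minimizer attaining this bound and to show that every minimizer has the claimed form. Taking $r_i = r_{\min}$ for all $i$ and $\phi_i = p_i\alpha$ with $p_i \in \{-1,1\}$ and $\sum_{i=1}^N p_i = 0$ (realizable precisely because $N$ is even, as $\sum_i p_i$ shares the parity of $N$) yields $\mathrm{Im}(S) = r_{\min}\sin\alpha\sum_i p_i = 0$ and $\mathrm{Re}(S) = N r_{\min}\cos\alpha$, so $|S|$ meets the bound. For the converse I would trace back the equality conditions: attaining $|S| = N r_{\min}\cos\alpha$ forces simultaneously $|S| = \mathrm{Re}(S)$, i.e. $\mathrm{Im}(S) = 0$, and $\mathrm{Re}(S) = N r_{\min}\cos\alpha$, which in turn forces $r_i = r_{\min}$ and $\cos\phi_i = \cos\alpha$ (hence $\phi_i = \pm\alpha$) for every $i$; feeding $\phi_i = \pm\alpha$ back into $\mathrm{Im}(S) = 0$ imposes $\sum_i p_i = 0$, recovering exactly the stated family.

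I expect the main obstacle to be conceptual rather than computational: the argument is short only once one recognizes that modulus minimization collapses to real-part minimization, and the step that must be justified with care is that no configuration can beat the bound by trading real-axis progress for imaginary-axis cancellation --- precisely what the inequality $|S| \ge \mathrm{Re}(S)$ rules out. The evenness of $N$ enters solely to make the balanced sign pattern feasible, and the hypothesis $|\alpha| < \frac{\pi}{4}$ is used only loosely (any $|\alpha| < \frac{\pi}{2}$ would suffice) to keep $\cos\phi_i$ positive so that the bound is meaningful.
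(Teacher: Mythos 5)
Your proof is correct, and it takes a genuinely different route from the paper's. The paper works with $|S|^2$: it first shows via the partial derivatives $\partial|S|^2/\partial r_k$ that all amplitudes must sit at $r_\mathrm{min}$ (this step needs $\cos(\phi_i-\phi_j)>0$, hence the hypothesis $\alpha<\frac{\pi}{4}$), then invokes concavity of $\sum_{j>i}\cos(\phi_i-\phi_j)$ together with Bauer's maximum principle to push every $\phi_i$ to a vertex of the box $[-\alpha,\alpha]^N$, and finally counts pairs to obtain a convex quadratic in the number $k$ of phases at $+\alpha$, minimized at $k=\frac{N}{2}$. You bypass all of this machinery with the single inequality $|S|\ge \mathrm{Re}(S)\ge N r_\mathrm{min}\cos\alpha$, exhibit the balanced configuration attaining it, and recover the minimizer set from the equality conditions. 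This buys three things the paper's argument does not give: (i) the explicit optimal value $N r_\mathrm{min}\cos\alpha$ as a by-product (which is exactly what Corollary \ref{corol1} needs downstream); (ii) a proof valid for any $\alpha<\frac{\pi}{2}$ rather than $\alpha<\frac{\pi}{4}$, as you correctly note; and (iii) a characterization of \emph{all} minimizers rather than just the location of one extreme point, since the termwise equality $r_i\cos\phi_i=r_\mathrm{min}\cos\alpha$ forces $r_i=r_\mathrm{min}$ and $\phi_i=\pm\alpha$, and $\mathrm{Im}(S)=0$ then forces $\sum_i p_i=0$. One small caveat applies to both proofs alike: the converse direction implicitly assumes $r_\mathrm{min}>0$ and $\alpha>0$ (if $r_\mathrm{min}=0$, the minimum is $0$ and phases are unconstrained; if $\alpha=0$, the sign pattern is vacuous), which in the paper's application means $\delta_g<1$ and $\alpha_g+\delta_p>0$; it would be worth stating this explicitly, but it is not a gap specific to your argument.
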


\begin{remark}
 \normalfont{The result of Lemma \ref{lemma1} can intuitively be  understood by referring to Fig. \ref{fig:random_walk}. Specifically, minimizing the modulus of the sum of $N$ vectors is analogous to minimizing the endpoint distance of a random walk in the complex plane. Given that each step in the walk has a bounded magnitude and a direction constrained within specific limits, the minimal total displacement is achieved when each step has the smallest possible magnitude, and the phases are split evenly between the upper and lower bounds. That is, exactly $\frac{N}{2}$ steps take the upper-bound direction, and the remaining $\frac{N}{2}$ steps take the lower-bound direction. }
\end{remark}

\begin{proof}
     We start by considering $|S|^2$, since its minimum coincides with that of $|S|$. In particular, $|S|^2$ can be expanded using $S$ and $S^*$, as follows:
     \begin{align} \nonumber
         |S|^2  &= \left( \displaystyle \sum_{i=1}^{N}r_i e^{j\phi_i}\right)\left( \displaystyle \sum_{k=1}^{N}r_k e^{-j\phi_k}\right)\\ 
         & = \displaystyle \sum_{i=1}^{N} r_i^2 + 2\displaystyle \sum_{i=1}^{N}\displaystyle \sum_{j>i}r_i r_j \mathrm{cos}(\phi_i-\phi_j).
     \label{step1}
     \end{align}
     We first determine the value of each $r_i$. To do so, we calculate the derivative of $|S|^2$ with respect to a certain $r_k$ ($k \in \{1,\ldots,N\}$), as follows:
     \begin{equation}
         \label{derivative_S2}
         \frac{\partial |S|^2}{\partial r_k} = 2r_k + 2\displaystyle\sum_{i \neq k}r_k \mathrm{cos}(\phi_i-\phi_j).
     \end{equation}
     Since $\alpha < \frac{\pi}{4}$, it is guaranteed that the term $\mathrm{cos}(\phi_i-\phi_j)$ is always positive. Under this condition, it can be easily seen that the r.h.s. term of (\ref{derivative_S2}) is strictly positive. Thus, increasing any $r_i$ would always result in an increase in $|S|^2$. Consequently, minimizing $|S|^2$ requires first minimizing the contribution of each $r_i$, that is, setting $r_i = r_\mathrm{min}$. (\ref{step1}) then becomes:
     \begin{equation}
     \label{step2}
         |S|^2 \ge N r_\mathrm{min}^2 + 2r_\mathrm{min}^2\displaystyle\sum_{i=1}^{N}\displaystyle\sum_{j>i} \mathrm{cos}(\phi_i-\phi_j).
     \end{equation}
     Secondly, since $\mathrm{cos}(x)$ is concave on $[-2\alpha,2\alpha]$, the sum $\sum_{j>i}\mathrm{cos}(\phi_i-\phi_j)$ is concave on the convex set $[-\alpha,\alpha]$.  By Bauer's maximum principle \cite{Bauer1958MinimalstellenVF}, the minimum of $|S|^2$ is attained at an extreme point of the $N$-dimensional box $[-\alpha,\alpha]^N$, namely when each $\phi_i$ equals either $+\alpha$ or $-\alpha$.
     
     The remaining question is to determine the partition of $+\alpha$ and $-\alpha$ among all pairs of $\phi_i-\phi_j$. To do so, let's consider an integer $k \in \{0,\ldots,N\}$ and the two following sets:
     \begin{equation}
         \mathcal{A}^+ = \{i | \phi_i = \alpha\} \quad \mathrm{and} \quad \mathcal{A}^- =  \{i  |  \phi_i = -\alpha\}. \nonumber
     \end{equation}
     By definition, it is clear that $|\mathcal{A}^+| = k$ and $|\mathcal{A}^-| = N-k$. Using these two sets, we can further expand the cross-term of (\ref{step2}) as follows:
     \begin{align}\nonumber
         \displaystyle\sum_{i=1}^{N}\displaystyle\sum_{j>i}\mathrm{cos}(\phi_i-\phi_j) &= \displaystyle\sum_{i\in \mathcal{A}^+}\displaystyle\sum_{j\neq i, j\in \mathcal{A}^+} 1 + \displaystyle\sum_{i\in\mathcal{A}^-}\displaystyle\sum_{j\neq i, j\in\mathcal{A}^-} 1\\  &+\displaystyle\sum_{i\in\mathcal{A}^+}\displaystyle\sum_{j\in\mathcal{A}^-} \mathrm{cos}(2\alpha).
         \label{step3}
     \end{align}
     From (\ref{step3}), the cross-term has three contributions. (i) When all distinct pairs $(i,j)$ are in $\mathcal{A}^+$, which corresponds to $\binom{k}{2}$ terms. (ii) When all distinct pairs $(i,j)$ are in $\mathcal{A}^-$, which corresponds to $\binom{N-k}{2}$ terms. (iii) When $i$ or $j$ is in $\mathcal{A}^+$ or $\mathcal{A}^-$, which corresponds to $k(N-k)$ terms. Consequently, (\ref{step3}) can be further simplified as follows:
     \begin{align}\nonumber
         \displaystyle\sum_{i=1}^{N}\displaystyle\sum_{j>i}\mathrm{cos}(\phi_i&-\phi_j) = \binom{k}{2}+\binom{N-k}{2} \\ \nonumber
         & + k(N-k) \mathrm{cos}(2\alpha)\\
         = \Big(1-\mathrm{cos}(2\alpha)\Big)&(k^2-Nk) + \frac{N(N-1)}{2}.
         \label{step4}
     \end{align}
     Finally, since $\mathrm{cos}(2\alpha)$ is never less than  $-1$ under the given conditions, (\ref{step4}) is simply a convex quadratic polynomial in $k$, whose minimum is achieved at $k=\frac{N}{2}$. That is, exactly $\frac{N}{2}$ vectors take the direction $\alpha$ and the remaining $\frac{N}{2}$ vectors take the direction $-\alpha$, which concludes the proof. 
\end{proof}

The  direct application of Lemma \ref{lemma1} leads to the following Theorem 1, which summarizes the worst-case impairment configuration in our system:

\begin{theorem}
\label{theorem1}
For a ULA with $N$ antennas, under the hardware impairment model described in Section \ref{sec:model_pert} and for a given DoD $\theta$, The worst-case scenario that causes the greatest deterioration of beamforming performance occurs when:  $\forall i \in \{1,\ldots,N\}$
\begin{itemize}
    \item The amplitude of the gain, phase and position impairments are  maximal: $\rho_i = 1 - \delta_g$, $|\varphi_i| = \alpha_g $, $|\varepsilon_{x,i}| = \delta_p $
    \item The phase and position impairments are sign-aligned, so a positive deviation in one corresponds to a positive deviation in the other: $\varepsilon_{x,i} \mathrm{cos}(\theta) - \varphi_i = p_i (\delta_p+\alpha_g)$,
    \item The array is evenly split between positive and negative phase/position deviations: $p_i \in \{-1,1\}$ and $\sum_{i=1}^{N}p_i = 0$.
\end{itemize}
\end{theorem}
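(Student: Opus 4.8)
The plan is to read Theorem~\ref{theorem1} as a direct corollary of Lemma~\ref{lemma1} applied to the correlation in (\ref{correl2}). First I would substitute the perturbed vector $\boldsymbol{\tilde{e}}(\boldsymbol{\vec{u}})$ from (\ref{vec_pert}) into the SNR expression and note that the resulting SNR is monotonically increasing in the correlation modulus $|\boldsymbol{\tilde{e}}^H(\boldsymbol{\vec{u}})\boldsymbol{e}(\boldsymbol{\vec{u}})|$. Consequently, the ``greatest deterioration of beamforming performance'' is exactly equivalent to minimizing the right-hand side of (\ref{correl2}) over the admissible impairments, i.e.\ minimizing $\left|\sum_{i=1}^N \rho_i e^{j2\pi(\varepsilon_{x,i}\cos(\theta)-\varphi_i)}\right|$. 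I would then cast this sum in the abstract form of Lemma~\ref{lemma1} by identifying $r_i = \rho_i$ and $\phi_i = 2\pi(\varepsilon_{x,i}\cos(\theta)-\varphi_i)$.

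Next I would verify the hypotheses of Lemma~\ref{lemma1}. The amplitude obeys $\rho_i \in [1-\delta_g,1]$, so $r_\mathrm{min} = 1-\delta_g$. The compound phase ranges over $[-\alpha,\alpha]$ with $\alpha = 2\pi(\delta_p|\cos(\theta)|+\alpha_g)$, since $\varepsilon_{x,i}\in[-\delta_p,\delta_p]$ and $\varphi_i\in[-\alpha_g,\alpha_g]$ enter additively. Using $|\cos(\theta)|\le 1$ together with the modeling assumption $|\alpha_g+\delta_p|<\tfrac14$ is precisely what keeps the compound phase inside the regime $|\alpha|<\tfrac{\pi}{4}$ demanded by the lemma (this is the one nontrivial check, as it is what guarantees $\cos(\phi_i-\phi_j)>0$ and the concavity used in the lemma's proof). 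With the hypotheses in hand, Lemma~\ref{lemma1} applies verbatim and gives, at the minimizer, $r_i = r_\mathrm{min}$ and $\phi_i = p_i\alpha$ with $p_i\in\{-1,1\}$ and $\sum_{i=1}^N p_i = 0$.

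It then remains to translate this abstract minimizer back to the physical impairments, which I expect to be the main obstacle rather than the invocation of the lemma itself. The condition $r_i=r_\mathrm{min}$ immediately yields $\rho_i=1-\delta_g$, the first bullet for the gain. The condition $\phi_i=p_i\alpha$ forces the affine map $(\varepsilon_{x,i},\varphi_i)\mapsto \varepsilon_{x,i}\cos(\theta)-\varphi_i$ to its extreme value over the box $[-\delta_p,\delta_p]\times[-\alpha_g,\alpha_g]$; since the extrema of an affine function on a box are attained at a corner, this requires $|\varepsilon_{x,i}|=\delta_p$ and $|\varphi_i|=\alpha_g$ with signs aligned so that $\varepsilon_{x,i}\cos(\theta)-\varphi_i=p_i(\delta_p+\alpha_g)$, giving the remaining two bullets, while the even-split constraint $\sum_i p_i=0$ carries over directly. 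The delicate point to handle carefully here is the factor $\cos(\theta)$ scaling the position term: strictly the reachable extreme phase is $2\pi(\delta_p|\cos(\theta)|+\alpha_g)$, so the clean form $p_i(\delta_p+\alpha_g)$ in the statement corresponds to reading the position saturation $|\varepsilon_{x,i}|=\delta_p$ directly against the angle, and one should also note that the balanced partition $\sum_i p_i=0$ is exactly realizable only because $N$ is taken even (odd $N$ would split as $\lfloor N/2\rfloor$ and $\lceil N/2\rceil$).
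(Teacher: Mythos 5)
Your route is the same as the paper's: its proof is a one-liner applying Lemma~\ref{lemma1} to (\ref{correl2}) with $r_i=\rho_i$, $\phi_i = 2\pi(\varepsilon_{x,i}\cos(\theta)-\varphi_i)$, $r_\mathrm{min}=1-\delta_g$, $r_\mathrm{max}=1$, and your elaborations --- monotonicity of the SNR in the correlation modulus, the corner-of-the-box argument mapping $\phi_i=p_i\alpha$ back to $|\varepsilon_{x,i}|=\delta_p$, $|\varphi_i|=\alpha_g$ with aligned signs, and the even-$N$ caveat --- are correct details that the paper leaves implicit.

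However, the step you yourself flag as ``the one nontrivial check'' is wrong, and it is a genuine gap. The modeling assumption $|\alpha_g+\delta_p|<\tfrac{1}{4}$ only gives $|\phi_i| \le 2\pi(\delta_p|\cos(\theta)|+\alpha_g) < \tfrac{\pi}{2}$, not $<\tfrac{\pi}{4}$: you have lost a factor of $2$. Already for the paper's own simulation values $\delta_p=\alpha_g=0.1$ one has $2\pi(\delta_p+\alpha_g)=0.4\pi > \tfrac{\pi}{4}$, so Lemma~\ref{lemma1} as stated does not apply, and precisely the two ingredients you cite ($\cos(\phi_i-\phi_j)>0$ and concavity of $\cos$ on $[-2\alpha,2\alpha]$) break down once $\alpha>\tfrac{\pi}{4}$. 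To be fair, the paper's proof hides the same defect by writing $\alpha=\alpha_g+\delta_p$, silently dropping the $2\pi$. The gap is reparable, because the lemma's conclusion actually holds for any $\alpha<\tfrac{\pi}{2}$ by a more elementary argument: $|S|\ge \mathrm{Re}(S)=\sum_{i} r_i\cos(\phi_i)\ge N r_\mathrm{min}\cos(\alpha)$, since every $\cos(\phi_i)\ge\cos(\alpha)>0$, and the balanced configuration $r_i=r_\mathrm{min}$, $\phi_i=p_i\alpha$, $\sum_i p_i=0$ attains this bound exactly; tightness of both inequalities then forces $r_i=r_\mathrm{min}$, $|\phi_i|=\alpha$ and $\mathrm{Im}(S)=0$, which recovers exactly the claimed minimizers. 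So you should either prove and invoke the lemma in this strengthened form (valid up to $\tfrac{\pi}{2}$), or restrict the hypothesis to $\delta_p+\alpha_g<\tfrac{1}{8}$; as written, your verification does not establish the theorem on the parameter range the model allows.
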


\begin{proof}
The proof of Theorem \ref{theorem1} follows directly by applying Lemma \ref{lemma1} to (\ref{correl2}), which gives $\vec{\boldsymbol{v}}_i = \rho_i e^{j2\pi(\varepsilon_{x,i}\mathrm{cos}(\theta)-\varphi_i)}$, and setting $r_\mathrm{min} = 1-\delta_g$,  $r_\mathrm{max} = 1$ and $\alpha = \alpha_g+\delta_p$.
\end{proof}

\begin{remark}
 \normalfont{Theorem 1 presents three conditions for the worst-case scenario in a ULA. In the absence of gain and phase impairments, these conditions can be simplified to two: the amplitude of the position impairment is maximal, with exactly half of the antennas experiencing a positive position shift and the other half experiencing a negative position shift. }
\end{remark}

\begin{figure}
\centerline{\includegraphics[width=0.8\columnwidth]{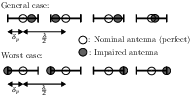}}
\caption{Illustration of the worst-case ULA configuration with $N=4$.}
\label{fig:worst_ULA}
\end{figure}

From Theorem \ref{theorem1}, a natural question arises: what does the worst-case antenna array look like? Fig. \ref{fig:worst_ULA} provides an example of a worst-case ULA configuration under the considered hardware impairments. In this example with $N=4$, the first two antennas are shifted by $\delta_p$, while the  remaining two are shifted by $-\delta_p$. Note that there are in total $6$ such configurations that result in the worst-case scenario (\emph{i.e.} $\binom{N}{N/2}$ combinations). Since the correlation involves the sum of each antenna’s contribution, the order of the impaired antennas does not affect the outcome. 

In the sequel, we denote the worst case channel vector by ${\boldsymbol{e}_\mathrm{wst}}$. From Theorem \ref{theorem1}, we can easily derive the following corollary regarding the worst case SNR loss: 

\begin{corollary}
\label{corol1}
    The worst case SNR variation, denoted by $\Delta \mathrm{SNR}$, for a ULA with N antennas and for a given DoD $\theta$, using the impairment model described in Section \ref{sec:model_pert}, is given as follows:
    \begin{equation}
    \label{key1}
        \Delta \mathrm{SNR}  = (1-\delta_g)^2 \mathrm{cos}^2\left(2\pi\big(\delta_p\mathrm{cos}(\theta)+\alpha_g\big)\right). 
    \end{equation}
\end{corollary}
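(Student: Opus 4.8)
The plan is to compute $\Delta\mathrm{SNR}$ as the ratio between the worst-case SNR and the nominal SNR, and to show that substituting the worst-case configuration of Theorem~\ref{theorem1} into the correlation~(\ref{correl2}) collapses the phasor sum to a single real cosine. First I would write $\Delta\mathrm{SNR} = \mathrm{SNR}_\mathrm{wst}/\mathrm{SNR}_\mathrm{nom}$, where by~(\ref{SNR_def}) with $\boldsymbol{h} = \beta\boldsymbol{e}_\mathrm{wst}$ and $\boldsymbol{w} = \boldsymbol{e}(\boldsymbol{\vec{u}})$ the numerator is $\mathrm{SNR}_\mathrm{wst} = \beta^2|\boldsymbol{e}_\mathrm{wst}^H\boldsymbol{e}|^2/\sigma^2$, while the denominator is $N\beta^2/\sigma^2$ from~(\ref{SNR_nom}). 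Since the factor $\beta^2/\sigma^2$ cancels, it remains only to evaluate the normalized squared correlation $|\boldsymbol{e}_\mathrm{wst}^H\boldsymbol{e}|^2/N$ at the worst-case point.

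Next I would substitute the worst-case values identified in Theorem~\ref{theorem1} into~(\ref{correl2}). Setting $\rho_i = 1-\delta_g$ for every $i$ pulls the common amplitude out of the sum, and setting $\varepsilon_{x,i}\cos(\theta)-\varphi_i = p_i(\delta_p\cos(\theta)+\alpha_g)$ with $p_i\in\{-1,+1\}$ leaves a sum of unit-modulus phasors whose phases take only the two opposite extreme values $\pm 2\pi(\delta_p\cos(\theta)+\alpha_g)$. The balanced-split condition $\sum_i p_i = 0$ guarantees that exactly $N/2$ phasors point at $+$ and $N/2$ at $-$, so the imaginary parts cancel in conjugate pairs and, by Euler's formula, the sum reduces to $(1-\delta_g)\,N\cos\!\big(2\pi(\delta_p\cos(\theta)+\alpha_g)\big)$. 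Dividing by $\sqrt{N}$ as prescribed in~(\ref{correl2}) then yields $|\boldsymbol{e}_\mathrm{wst}^H\boldsymbol{e}| = \sqrt{N}\,(1-\delta_g)\,\big|\cos(2\pi(\delta_p\cos(\theta)+\alpha_g))\big|$.

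Finally I would square this expression and divide by $N$, which cancels the factor $N$ exactly and produces $\Delta\mathrm{SNR} = (1-\delta_g)^2\cos^2\!\big(2\pi(\delta_p\cos(\theta)+\alpha_g)\big)$, as claimed. The step I expect to require the most care is the bookkeeping of the DoD factor: one must check that the extreme phase reachable by $\varepsilon_{x,i}\cos(\theta)-\varphi_i$ is indeed $\pm(\delta_p\cos(\theta)+\alpha_g)$ --- attained by sign-aligning the maximal position shift $\varepsilon_{x,i}=\pm\delta_p$ with the maximal phase shift $\varphi_i=\mp\alpha_g$ as dictated by the second condition of Theorem~\ref{theorem1} --- so that $\cos(\theta)$ multiplies only the position term inside the cosine and not the gain-phase term. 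Everything else is routine algebra, and the order-invariance of the sum confirms that all $\binom{N}{N/2}$ worst-case arrays return the same value.
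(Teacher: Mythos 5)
Your proof is correct and follows essentially the same route as the paper: substitute the worst-case configuration of Theorem~\ref{theorem1} into the correlation~(\ref{correl2}), use the balanced split $\sum_i p_i = 0$ to collapse the conjugate-pair phasor sum into $N\cos\left(2\pi(\delta_p\cos(\theta)+\alpha_g)\right)$, and take the ratio to $\mathrm{SNR}_\mathrm{nom} = N\beta^2/\sigma^2$. Your closing remark about the $\cos(\theta)$ factor multiplying only the position term is well placed, since it matches the expression actually used in the paper's derivation~(\ref{SNR_worst1}) rather than the slightly abbreviated form $p_i(\delta_p+\alpha_g)$ written in the statement of Theorem~\ref{theorem1}.
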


\begin{proof}
    Using the results of Theorem \ref{theorem1}, we define the perturbed SNR under the worst-case hardware impairments, denoted by $\mathrm{SNR}_\mathrm{wst}$, as follows:
    \begin{align} \nonumber
        &\mathrm{SNR}_\mathrm{wst} \triangleq  \frac{\mathbb{E}\{|\beta\boldsymbol{e}_\mathrm{wst}^H(\boldsymbol{\vec{u}})\boldsymbol{e}(\boldsymbol{\vec{u}})s|^2\}}{\mathbb{E}\{|n|^2\}}\\ \nonumber
        &= \frac{\beta^2}{N\sigma^2}\left|\displaystyle\sum_{i=1}^{N}(1-\delta_g)e^{j2\pi p_i(\mathrm{cos}(\theta)\delta_p+\alpha_g)}\right|^2\\ \nonumber
        &= \frac{\beta^2(1-\delta_g)^2}{N\sigma^2}\left|N\mathrm{cos}(2\pi (\mathrm{cos}(\theta)\delta_p+\alpha_g))\right|^2\\
        &=\frac{N\beta^2}{\sigma^2}(1-\delta_g)^2\mathrm{cos}^2\left(2\pi (\mathrm{cos}(\theta)\delta_p+\alpha_g)\right).
    \label{SNR_worst1}
    \end{align}
    The SNR loss can then be simply defined as the ratio between $\mathrm{SNR}_\mathrm{wst}$ and $\mathrm{SNR}_\mathrm{nom}$, as follows:
    \begin{equation}
        \Delta\mathrm{SNR} \triangleq \frac{\mathrm{SNR}_\mathrm{wst}}{ \mathrm{SNR}_\mathrm{nom}} = (1-\delta_g)^2 \mathrm{cos}^2\left(2\pi\big(\delta_p\mathrm{cos}(\theta)+\alpha_g\big)\right),
    \end{equation}
    \noindent which concludes the proof.
\end{proof}

From the expression of the SNR loss given in Corollary \ref{corol1}, we first verify that $\Delta \mathrm{SNR} = 1$ when there is no impairment (\emph{i.e.} $\delta_g = \alpha_g = \delta_p = 0$), which is consistent with its physical interpretation. Furthermore, the DoD angle also plays a crucial role in the SNR loss. More precisely, (\ref{key1}) suggests that the worst-case SNR loss is an increasing function of $\theta$ for $\theta \in [0^{\circ};90^{\circ}]$. While this behavior is mathematically consistent, it also has a clear physical interpretation: as the DoD angle becomes more aligned with the broad side of the array, the beamforming process experiences less degradation since the antenna displacements are orthogonal to the wavefront in that extreme case. More strikingly, $\Delta \mathrm{SNR}$ does not depend on the array size, which may seem counterintuitive. This implies that the worst-case SNR loss remains the same for both small-scale and large-scale arrays. These trends will be further illustrated with simulation results in Section \ref{sec:simu}.

\subsection{Average Case Analysis}

In this section, we evaluate the average impact of both gain and position impairments on the SNR at the considered UE. To this end, we adopt the impairment models described in Section \ref{sec:model_pert}. In particular, we focus on quantifying the  SNR variation relative to the nominal case, whose expression is provided in the following theorem:

\begin{theorem}
\label{theorem2}
    The average SNR variation, denoted by $\Delta \overline{\mathrm{SNR}}$, for a ULA with N antennas and for a given DoD $\theta$, using the impairment model described in Section \ref{sec:model_pert}, is given as follows:
    \begin{align} \nonumber
        \Delta\mathrm{\overline{SNR}} =& \frac{3-3\delta_g+\delta_g^2}{3N}+\frac{(N-1)(4-4\delta_g+\delta_g^2)}{4N}\\
        &\times \mathrm{sinc}^2\left(2\pi \mathrm{cos}(\theta)\delta_p\right)\mathrm{sinc}^2\left(2\pi\alpha_g\right).
        \label{key2}
    \end{align}
\end{theorem}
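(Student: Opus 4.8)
The plan is to start from the definition of the average SNR variation as the ratio $\Delta\overline{\mathrm{SNR}} = \mathbb{E}[\mathrm{SNR}]/\mathrm{SNR}_{\mathrm{nom}}$ and, using \eqref{SNR_nom} together with the correlation expression \eqref{correl2}, reduce the whole problem to a single second-moment computation. Concretely, since $\mathrm{SNR} = \beta^2|\boldsymbol{\tilde{e}}^H\boldsymbol{e}|^2/\sigma^2$ and $\mathrm{SNR}_{\mathrm{nom}} = N\beta^2/\sigma^2$, the attenuation $\beta$ and noise $\sigma^2$ cancel and one is left with
\[
\Delta\overline{\mathrm{SNR}} = \frac{1}{N^2}\,\mathbb{E}\!\left[\Big|\sum_{i=1}^{N}\rho_i\,e^{j2\pi(\varepsilon_{x,i}\cos\theta-\varphi_i)}\Big|^2\right].
\]
I would introduce the shorthand $\psi_i = 2\pi(\varepsilon_{x,i}\cos\theta-\varphi_i)$ and expand the squared modulus as a double sum, separating the $N$ diagonal terms $\rho_i^2$ from the $N(N-1)$ off-diagonal terms $\rho_i\rho_k\,e^{j(\psi_i-\psi_k)}$.

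Next I would exploit the independence structure of the impairment model of Section~\ref{sec:model_pert}. Across distinct antennas $i\neq k$ all the random variables are mutually independent, so each off-diagonal expectation factorizes as $\mathbb{E}[\rho_i e^{j\psi_i}]\,\mathbb{E}[\rho_k e^{-j\psi_k}]$; within a single antenna $\rho_i$ is independent of $(\varepsilon_{x,i},\varphi_i)$, and $\varepsilon_{x,i}$ is independent of $\varphi_i$, so this further factorizes into $\mathbb{E}[\rho_i]\,\mathbb{E}[e^{j2\pi\varepsilon_{x,i}\cos\theta}]\,\mathbb{E}[e^{-j2\pi\varphi_i}]$. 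The diagonal terms require only $\mathbb{E}[\rho_i^2]$.

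The core computation is then the set of first and second moments of the three uniform laws. For the amplitude, $\rho\sim\mathcal{U}(1-\delta_g,1)$ gives $\mathbb{E}[\rho^2] = \tfrac{1-(1-\delta_g)^3}{3\delta_g} = \tfrac{3-3\delta_g+\delta_g^2}{3}$ and $\mathbb{E}[\rho] = 1-\tfrac{\delta_g}{2}$, hence $\mathbb{E}[\rho]^2 = \tfrac{4-4\delta_g+\delta_g^2}{4}$. For the phases I would recognize the two expectations as characteristic functions of symmetric uniform variables, yielding the real factors $\mathbb{E}[e^{-j2\pi\varphi}] = \mathrm{sinc}(2\pi\alpha_g)$ and $\mathbb{E}[e^{j2\pi\varepsilon\cos\theta}] = \mathrm{sinc}(2\pi\delta_p\cos\theta)$. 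Because these values are real, the conjugate factor $\mathbb{E}[e^{-j\psi_k}]$ equals $\mathbb{E}[e^{j\psi_k}]$, so each off-diagonal term collapses to $\mathbb{E}[\rho]^2\,\mathrm{sinc}^2(2\pi\delta_p\cos\theta)\,\mathrm{sinc}^2(2\pi\alpha_g)$.

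Finally I would assemble the counts: the $N$ diagonal terms contribute $N\,\mathbb{E}[\rho^2]$ and the $N(N-1)$ off-diagonal terms contribute $N(N-1)\,\mathbb{E}[\rho]^2\,\mathrm{sinc}^2(2\pi\delta_p\cos\theta)\,\mathrm{sinc}^2(2\pi\alpha_g)$; dividing by $N^2$ reproduces the first summand $\tfrac{3-3\delta_g+\delta_g^2}{3N}$ and the second summand $\tfrac{(N-1)(4-4\delta_g+\delta_g^2)}{4N}$ times the product of squared sinc factors, which is exactly \eqref{key2}. I do not expect a genuine obstacle: the only points demanding care are invoking the full independence (both antenna-to-antenna and within each antenna) to justify the two-stage factorization, and observing that both sinc factors are real so that no residual phase survives in the cross terms. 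The rest is the routine evaluation of uniform-distribution moments.
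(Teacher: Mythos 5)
Your proposal is correct and follows essentially the same route as the paper's proof: reduce the average SNR variation to the second moment of the correlation in \eqref{correl2}, split the expanded square into $N$ diagonal terms (requiring $\mathbb{E}\{\rho_i^2\}=\tfrac{3-3\delta_g+\delta_g^2}{3}$) and $N(N-1)$ cross terms that factorize by independence into $\mathbb{E}\{\rho\}^2$ times products of uniform characteristic functions, yielding the $\mathrm{sinc}^2$ factors. The only cosmetic difference is that you factor each cross term antenna-by-antenna and invoke the realness of the sinc values, whereas the paper groups the exponents as differences $\varepsilon_{x,i}-\varepsilon_{x,k}$ and $\varphi_k-\varphi_i$ before evaluating the squared characteristic functions — the computations are identical.
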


\begin{proof}
    Similarly to the proof of Corollary \ref{corol1}, we start by calculating the average received signal power, using the independence between the different perturbation models and linearity of expectation, as follows:
    \begin{align} \nonumber
        \mathbb{E}\{|\beta\boldsymbol{\tilde{e}}^H(\boldsymbol{\vec{u}})\boldsymbol{e}&(\boldsymbol{\vec{u}})s|^2\} = \frac{\beta^2}{N}\displaystyle\sum_{i=1}^{N}\mathbb{E}\{\rho_i^2\} \\ +\frac{\beta^2}{N}\displaystyle\sum_{i=1}^{N}\displaystyle\sum_{k \neq i }\mathbb{E}\{\rho_i\}\mathbb{E}\{\rho_k\}&\mathbb{E}\{e^{j2\pi \left((\varepsilon_{x,i}-\varepsilon_{x,k})\mathrm{cos}(\theta)-(\varphi_i-\varphi_k)\right)}\}.
    \label{theo2_1}
    \end{align}
    Since each $\rho_i$ follows a uniform distribution, the following equality holds:
    \begin{align}
    \label{theo2_2}
        \mathbb{E}\{\rho_i^2\} = \mathrm{Var}\{\rho_i\} + \mathbb{E}\{\rho_i\}^2 = \frac{3-3\delta_g+\delta_g^2}{3}.
    \end{align}
    Consequently, (\ref{theo2_1}) can be simplified as follows:
    \begin{align}\nonumber
       & \mathbb{E}\{|\beta\boldsymbol{\tilde{e}}^H(\boldsymbol{\vec{u}})\boldsymbol{e}(\boldsymbol{\vec{u}})s|^2\} = \frac{\beta^2(3-3\delta_g+\delta_g^2)}{3} + \frac{4-4\delta_g+\delta_g^2}{4} \\
       &\times\frac{\beta^2}{N}\displaystyle\sum_{i=1}^{N}\displaystyle\sum_{k\neq i}\mathbb{E}\{e^{j2\pi \mathrm{cos}(\theta)(\varepsilon_{x,i}-\varepsilon_{x,k})}\}\mathbb{E}\{e^{j2\pi(\varphi_k-\varphi_i)}\}.
       \label{theo2_3}
    \end{align}
    Furthermore, it can be noticed that each term in the sum on the  {r.h.s.} of (\ref{theo2_3}) corresponds to the  product of the characteristic functions of the random variables $\varepsilon_{x,i}$ and $\varepsilon_{x,k}$ evaluated at $2\pi\mathrm{cos}(\theta)$ (or  $\varphi_k$ and $\varphi_i$ evaluated at $2\pi$, respectively), whose expressions are known and given by:
    \begin{align}\nonumber
            \mathbb{E}\{e^{j2\pi \mathrm{cos}(\theta)(\varepsilon_{x,i}-\varepsilon_{x,k})}\} &= \left(\frac{e^{j2\pi\mathrm{cos}(\theta)\delta_p}-e^{-j2\pi \mathrm{cos}(\theta)\delta_p}}{j4\pi \mathrm{cos}(\theta)\delta_p}\right)^2\\
            &= \mathrm{sinc}^2(2\pi \mathrm{cos}(\theta)\delta_p).
    \label{theo2_4}
    \end{align}
    Using (\ref{theo2_4}), (\ref{theo2_3}) can be further simplified as follows:
    \begin{align} \nonumber
         \mathbb{E}\{|\beta\boldsymbol{\tilde{e}}^H(&\boldsymbol{\vec{u}})\boldsymbol{e}(\boldsymbol{\vec{u}})s|^2\}  = \frac{\beta^2(3-3\delta_g+\delta_g^2)}{3} + \frac{4-4\delta_g+\delta_g^2}{4} \\
        & \times \beta^2(N-1)\mathrm{sinc}^2(2\pi \mathrm{cos}(\theta)\delta_p)\mathrm{sinc}^2(2\pi\alpha_g).
    \label{theo2_5}    
    \end{align}
    The average SNR variation can then be simply calculated by considering the ratio between the r.h.s. term of (\ref{theo2_5}) and the nominal received signal power $\mathbb{E}\{\left|\beta \boldsymbol{e}^H(\boldsymbol{\vec{u}})\boldsymbol{e}(\boldsymbol{\vec{u}})s |^2\right\}$, which equals $N\beta^2$ according to (\ref{SNR_nom}). This concludes the proof.  
\end{proof}

Similarly to the previously discussed worst-case SNR loss, it is straightforward to verify that $\Delta\overline{\mathrm{SNR}} = 1$  in the absence of hardware impairments. More interestingly, when position and phase impairments are absent (\emph{i.e.} $\delta_g = \alpha_g = 0$), the worst-case gain impairment (\emph{i.e.} $\delta_g = 1$) leads to an average SNR loss of  $\frac{3N+1}{12N}$, which corresponds to approximately a 6dB loss compared to the nominal case for large antenna arrays. Furthermore, for large arrays, $\Delta\overline{\mathrm{SNR}}$ tends to a constant value independent of $N$, an unexpected behavior that mirrors the trend observed in the worst-case SNR loss. Finally, it can be observed once again that the average SNR loss increases with the DoD angle $\theta$, a trend that will be confirmed through simulation in the next section.


\section{Simulation \& Discussion}
\label{sec:simu}



\begin{figure}
\centerline{\includegraphics[width=0.8\columnwidth]{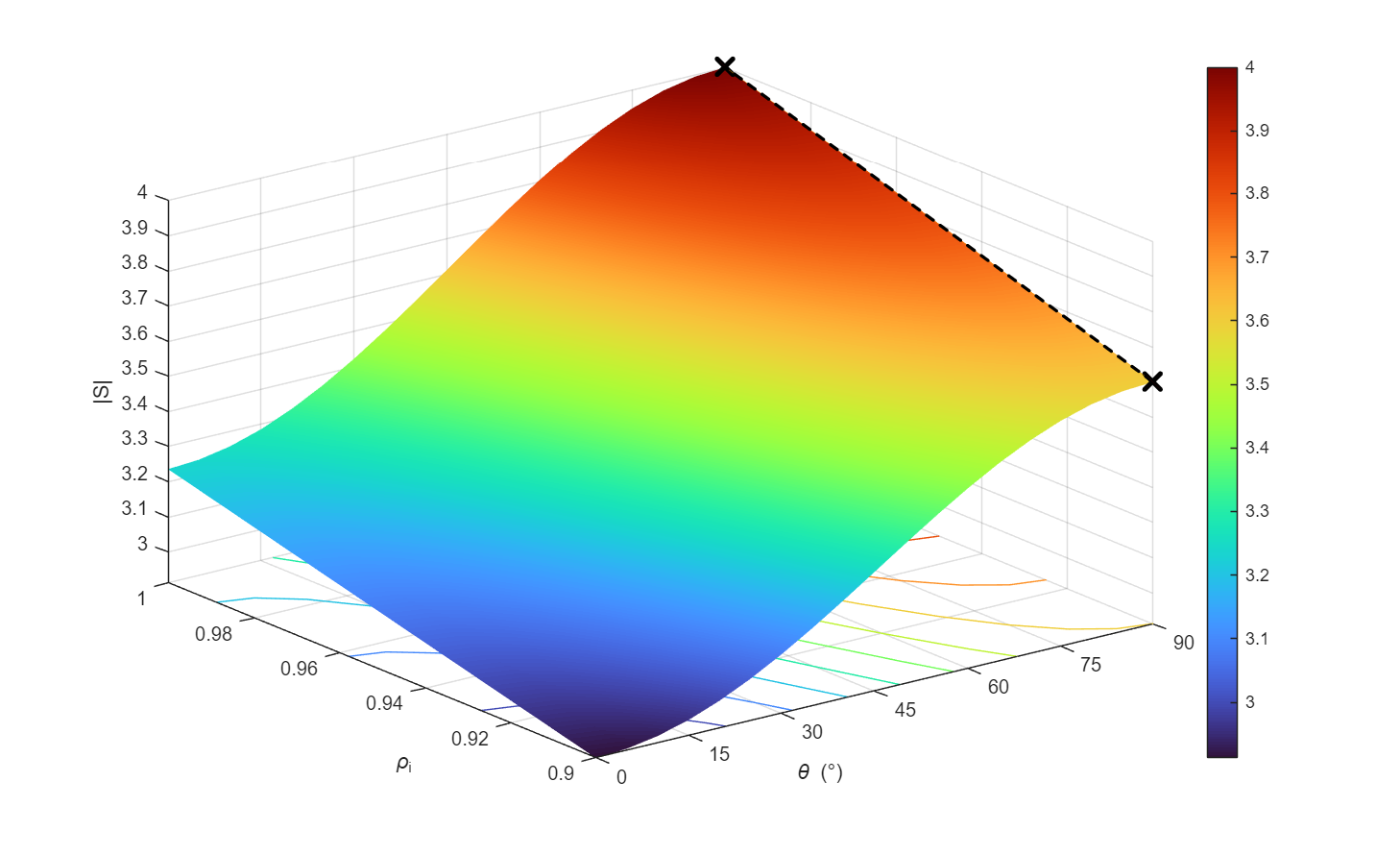}}
\caption{Worst-case correlation variation versus $\rho_i$ and DoD angle $\theta$ ($\delta_g = 0.1$, $\varphi_i = 0$, $\delta_p = 0.1$, $N=16$).}
\label{fig:S_var}
\end{figure}

In this section, the simulation results are presented and analyzed. The simulations are carried out using a ULA with $\frac{\lambda}{2}$ spacing. First, in Fig. \ref{fig:S_var}, we analyze the behavior of the correlation amplitude  $|S|$ under the worst-case scenario described in Theorem \ref{theorem1}. To clearly illustrate the impact of impairments, we vary both the per-antenna gain $\rho_i$ from $1-\delta_g$ to $1$, and the DoD angle $\theta$ from $0^{\circ}$ to $90^{\circ}$. For this analysis, the ULA size is fixed to $N=16$, and the position impairment is fixed to its worst-case configuration as defined in Theorem \ref{theorem1}, meaning that exactly half the antennas are set to  $\delta_p$ and the other half to $-\delta_p$. When $\theta = 90^{\circ}$, $|S|$ varies between $3.6$ and $4$, as highlighted by the markers and the dotted line in Fig. \ref{fig:S_var}.  The lower bound $3.6$ corresponds to the maximum gain degradation ($\rho_i = 0.9$ ), whereas the upper bound $4$ represents the nominal case with $\rho_i = 1$ (\emph{i.e.} an ideal line-of-sight condition without gain impairment). This behavior is consistent with the simulation setup, where $N=16$, and the nominal correlation amplitude equals $\frac{1}{\sqrt{N}}$ in the absence of impairments. Furthermore, it can be noted that $|S|$ decreases as $\rho_i$ is reduced, confirming the expected impact of gain perturbation. Additionally, $|S|$ increases with $\theta$, which aligns with previous theoretical results and explains why the SNR loss is also an increasing function of the DoD angle.

\begin{figure}
\centerline{\includegraphics[width=0.8\columnwidth]{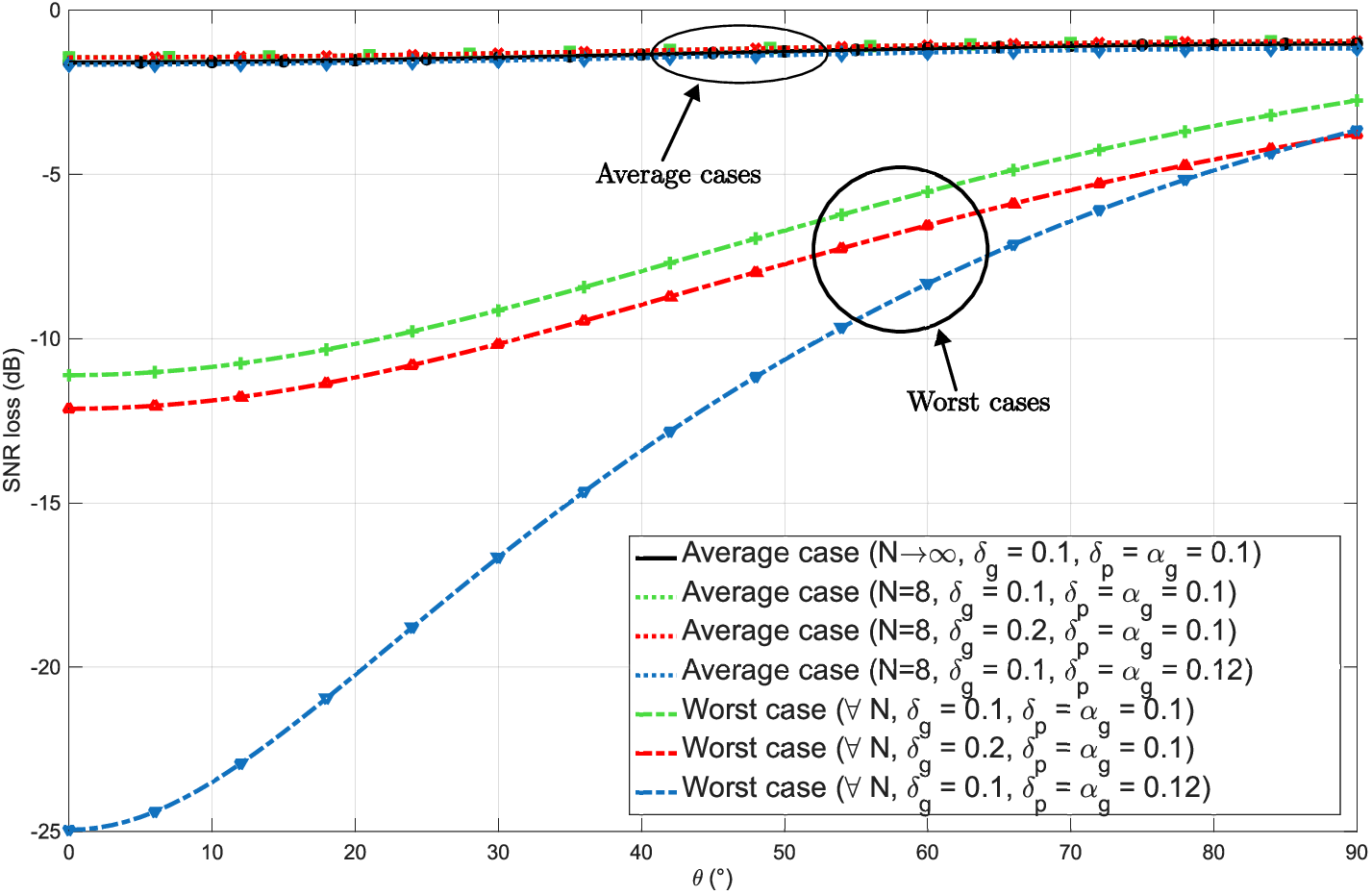}}
\caption{SNR loss vs $\theta$ for different array and perturbation configurations: lines represent theoretical plots, markers represent simulated plots.}
\label{fig:SNR_var}
\end{figure}

 Next, in Fig.\ref{fig:SNR_var}, we evaluate the SNR loss versus the steering angle $\theta$ for different array and perturbation configurations. In particular, the theoretical average and worst case results are obtained using (\ref{key1}) and (\ref{key2}) and are represented as plain lines in Fig. \ref{fig:SNR_var}, while the simulated results for each configuration are represented by markers in Fig. \ref{fig:SNR_var}. The simulated average-case result is obtained by aggregating 20000 realizations of the impairment variables, while the worst-case simulated result is obtained using the particle swarm optimization method to solve the optimization problem stated in Lemma \ref{lemma1}.

First, a perfect match between the theoretical and simulated results for all the considered configurations can be observed in Fig. \ref{fig:SNR_var}, which validates the theoretical expressions provided in Corollary \ref{corol1} and Theorem \ref{theorem2}. Furthermore, it is obvious that the SNR loss increases with $\theta$, which is consistent with our earlier mathematical results and also physically intuitive. 

Additionally, it can be observed that increasing the impairment parameters does not have a significant impact on the average SNR loss. This behavior can be explained by the fact that the average-case results vary only slightly for different values of $\delta_p$ and $\alpha_g$. According to Theorem 1 and the adopted amplitude error model, the parameters $\delta_g$, $\delta_p$, and $\alpha_g$ are all bounded. Consequently, the average SNR loss exhibits only minor variations across different impairment configurations. On the other hand, the same does not hold for the worst-case SNR loss. Specifically, doubling the gain impairment parameter $\delta_g$ only results in an additional SNR loss of about 1 dB compared to the $\delta_g = 0.1$ case. However, the phase and position impairment parameters, $\alpha_g$ and $\delta_p$ have a much more pronounced effect. When both $\alpha_g$ and $\delta_p$  are increased from 0.1 to 0.12, the worst-case SNR loss experiences a substantial degradation of up to -25dB at $\theta=0^{\circ}$. Such behavior can be understood by referring to Fig. \ref{fig:random_walk}: as the direction of each step in the random walk approaches the imaginary axis, the total resultant distance becomes much smaller, even if the step lengths remain nearly unchanged. In other words, the beamforming performance is considerably more sensitive to phase and position variations between elements than to gain variations. 

Finally, it can be observed that the SNR loss, in both the average and worst-case scenarios, depends only weakly on the array size $N$. As shown in (\ref{key1}), the worst-case SNR loss is theoretically independent of $N$, while the average-case results exhibit only minor variations as $N$ increases. This behavior can be explained by the fact that $\delta_g$ is bounded within $[0;1]$, hence, as $N$ grows, the influence of $\delta_g$ on $\Delta \mathrm{\overline{SNR}}$ rapidly diminishes, leading to an essentially constant average SNR loss for larger arrays.


\section{Conclusion}

In this paper, we investigated the beamforming degradation of a ULA in a MIMO communication system subject to hardware impairments. Unlike most existing works that address system-level effects, we focused on antenna-level perturbations. Our main contribution lies in the complete quantification of both average and worst-case beamforming losses under practical hardware impairments. We derived closed-form expressions for the SNR degradation and identified the impairment configuration leading to the worst-case loss. Interestingly, the worst-case beamforming loss was found to be independent of the array size, while the average loss only weakly depends on it and eventually becomes independent for large arrays. These findings provide useful insights into the robustness of MIMO arrays and serve as guidelines for array design and fabrication, helping to account for hardware tolerances and imperfections. In the future, the study could be generalized to take into account different array geometries (UPA for example) or different tasks such as DoD estimation.

\bibliographystyle{IEEEtran}
\bibliography{biblio}
\end{document}